\newcommand{\Endproof}{\hfill$\Box$\\}
\begin{document}
\title{The Fast Algorithm for Online k-server Problem on Trees\thanks{This paper has been supported by the Kazan Federal University Strategic Academic Leadership Program ("PRIORITY-2030").}}
%
%
\author{Kamil Khadiev\inst{1}\orcidID{0000-0002-5151-9908} \and
Maxim Yagafarov\inst{1} }
\authorrunning{K. Khadiev and M. Yagafarov}
%
\institute{Kazan Federal University, Kazan, Russia\\
\email{kamilhadi@gmail.com}}
\maketitle              
\begin{abstract}
  We consider online algorithms for the $k$-server problem on trees. 
There is a $k$-competitive algorithm for this problem, and it is the best competitive ratio. M. Chrobak and L. Larmore provided it. At the same time, the existing implementation has $O(n)$ time complexity for processing a query and $O(n)$ for prepossessing, where $n$ is the number of nodes in a tree. Another implementation of the algorithm has $O(k^2+k\log n)$ time complexity for processing a query and $O(n\log n)$ for prepossessing. 
We provide a new time-efficient implementation of the algorithm. It has $O(n)$ time complexity for preprocessing and $O\left(k(\log n)^2\right)$ for processing a query. 

\keywords{online algorithms, k-server problem, tree, time complexity} 
\end{abstract}
%
%
%
\section{Introduction}
\label{sec:intro}
Online optimization is a field of optimization theory that deals with optimization problems not knowing the future \cite{k2016}. 
The most standard method to define the effectiveness of an online algorithm is a competitive ratio \cite{st85,kmrs86}. The competitive ratio is the worst-case ratio between the cost of a solution found by the algorithm and the cost of an optimal solution.
In the general setting, online algorithms have unlimited computational power. Nevertheless, many papers consider them with different restrictions. Some of them are restrictions on memory \cite{bk2009,gk2015,blm2015,kkm2018,aakv2018,gs93,kk2022,kkzmkry2022,kk2019disj,k2021,kl2020,kk2020}, other ones are restrictions on time complexity \cite{fnn2006,rbm2013}.
This paper focuses on efficient online algorithms in terms of time complexity.
One of the well-known online minimization problems is the $k$-server problem on trees \cite{cl91}. Other related well-known problems are the matching problem, r-gathering problem, facility assignment problem \cite{kp93,an2015,ark2020}.
There is a $k$-competitive deterministic algorithm for the  $k$-server problem on trees, and the algorithm has the best competitive ratio. Expected competitive ratio for a best-known randomized algorithm \cite{bbmn2011,bbmn2015} is $O(\log^3 n\log^2 k)$, where $n$ is the number of nodes in a tree. In this paper, we are focused on the deterministic one. So, the competitive ratio of the deterministic algorithm is the best one. At the same time, the naive implementation has $O(n)$ time complexity for each query and preprocessing. There is a time-efficient algorithm for general graphs \cite{rbm2013} that uses a min-cost-max-flow algorithm, but it is too slow in the case of a tree.  
In the case of a tree, there exists an algorithm with time complexity $O(n\log n)$ for preprocessing and $O\left(k^2+k\log n\right)$ for each query \cite{kkmsy2020}.

We suggest a new time-efficient implementation of the algorithm from \cite{cl91}. It has $O(n)$ time complexity for preprocessing and $O\left(k(\log n)^2\right)$ for processing a query. It is based on data structures and techniques like a segment tree \cite{l2017guide},  heavy-light decomposition (heavy path decomposition) \cite{st83,ht84} for a tree and fast algorithms for computing Lowest Common Ancestor (LCA) \cite{bv93,bfc2000}. 
Let us compare our algorithm with the implementation from \cite{kkmsy2020}. Our prepossessing procedure is more efficient, and we obtain speed-up for the query processing procedure in the case of $k=\omega\left((\log n)^2\right)$.
The algorithm is more efficient than the naive algorithm in the case of $k=o\left(n/(\log n)^2\right)$.

The structure of the paper is as follows. Preliminaries are presented in Section \ref{sec:prelims}. Section \ref{sec:st} contains a subproblem on a segment tree that is used in the main algorithm. The main algorithm is discussed in Section \ref{sec:algorithm}. Section \ref{sec:conclusion} concludes the paper.

\section{Preliminaries}
\label{sec:prelims}
{\bf An online minimization problem} consists of a set $\cal{I}$ of inputs and a cost function. An input is $I = (x_1, \dots , x_n)$, where $n$ is a length of an input $|I|=n$. Furthermore, a set of feasible outputs (or solutions) ${\cal O}(I)$ is associated with each $I$; an output is $O = (y_1, \dots, y_n)$. The cost function assigns a positive real value $cost(I, O)$ to $I\in{ \cal I}$ and $O\in{\cal O}(I)$. The optimal  solution for $I\in{\cal I}$ is $O_{opt}(I)=argmin_{O\in{\cal O}(I)}cost(I,O)$.

Let us define an online algorithm for this problem.
{\bf A deterministic online algorithm}  $A$ computes an output sequence $A(I) = (y_1,\dots , y_n)$ such that $y_i$ is computed based on $x_1, \dots , x_i$.  
We say that $A$ is $c$-{\em competitive} if there exists a constant $\alpha\geq 0$ such that, for every $n$ and for any input $I$ of size $n$, we have: $cost(I,A(I)) \leq c \cdot cost(I,O_{Opt}(I)) + \alpha$. Here, $c$ is the minimal number that satisfies the inequality. Also we call $c$ the {\bf competitive ratio} of $A$. 

\subsection{Graph Theory}\label{sec:graph}
Let us consider a rooted tree $G=(V,E)$, where $V$ is a set of nodes (vertices), and $E$ is a set of edges. Let $n=|V|$ be the number of nodes, and $V=\{v^1,\dots,v^n\}$. Let the root of the tree be the $v^1$ node.

A path $P$ is a sequence of nodes $(v_1,\dots,v_h)$ that are connected by edges, i.e. $(v_i,v_{i+1})\in E$ for all $i\in\{1,\dots,h-1\}$. Note, that there are no duplicates among $v_1,\dots,v_h$. Here, $h$ is the length of the path.  We use $v\in P$ notation if there is $j$ such that $v_j=v$. The notation is reasonable, because there is no duplicates in a path. Note that for any two nodes $u$ and $v$ the path between them is unique because $G$ is a tree.

The distance $dist(v,u)$ between two nodes $v$ and $u$ is the length of the path between them. A height of a node $v$ is the distance from the root that is $dist(v^1,v)$.
For each node $v$ except the root node we can define a parent node $\textsc{Parent}(v)$, it is a node such that $dist(v^1,\textsc{Parent}(v))+1=dist(v^1,v)$ and it belongs to the path from $v^1$ to $v$. We assume that for the root node, $\textsc{Parent}(v^1)=NULL$. Additionally, we can define a set of children $\textsc{Children}(v)=\{u: \textsc{Parent}(u)=v\}$. 

{\bf Distance}$\quad$ For each node $v$ we compute the distance from the $v^1$ (root) node to the node $v$. We call it $dist(v^1,v)$. We can do it using Depth-first search algorithm \cite{cormen2001}.
%
There is a well-known simple algorithm for computing of $dist(v^1,v)$, we present it for completeness in Appendix \ref{apx:comp-dist}. Let $\textsc{ComputeDistance}$ be a subroutine that computes distances. After invocation of this procedure, we can obtain $dist(v^1,v)$ in $O(1)$ time complexity.

{\bf Heavy-Light Decomposition}$\quad$
Heavy-light decomposition is a decomposition of the tree to a set of paths ${\cal P}$. The technique is presented in \cite{ht84,st83}. It has the following properties:

    $\bullet$ Each node $v$ of the tree belongs to exactly one path from ${\cal P}$, i.e., all paths have no intersections, and they cover all nodes of the tree.
    
    $\bullet$ For any node $v$, a path from $v$ to the root of the tree contains nodes of at most $\log_2 n$ paths from ${\cal P}$.
    
    $\bullet$ Let us consider a node $v$ and a path $P\in{\cal P}$ such that $v\in P$. Then, $beg(v)$ is the node that belongs to $P$ and has the minimal height, i.e. $beg(v)$ is such that $dist(v^1,beg(v))=\min\limits_{u\in P}dist(v^1,u)$.
    
    $\bullet$ For a node $v$ of the tree, let $P(v)$ be the path from ${\cal P}$ that contains $v$.
    
    $\bullet$ For a node $v$ of the tree, let $index_P(v)$ be an index of an element of the path $P$. For an index $i$ of an element in the path $P$, let $node_P(i)$ be the node $v$. In other words, if $P=(v_1,\dots,v_h)$, then $v_{index_P(v)}=v$, and $v_i=node_P(i)$ 
     
    $\bullet$  We can construct the set ${\cal P}$ with $O(n)$ time complexity.

{\bf Lowest Common Ancestor}$\quad$
Given two nodes $u$ and $v$ of a rooted tree, the Lowest Common Ancestor is a node $w$ such that $w$ is an ancestor of both $u$ and $v$, and $w$ is the closest one to $u$ and $v$ among all such ancestors. The following result is well-known.

\begin{lemma}[\cite{bv93,bfc2000}]\label{lm:lca}
There is an algorithm for the LCA problem with the following properties:
%
 (i) The time complexity of the preprocessing step is $O(n)$
 %
 (ii)The time complexity of computing LCA for two nodes is $O(1)$.
\end{lemma}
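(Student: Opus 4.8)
The plan is to reduce the LCA problem to a Range Minimum Query (RMQ) problem and then solve that specialized RMQ instance within linear preprocessing and constant query time. First I would run a depth-first traversal of $G$ starting from $v^1$ and record the \emph{Euler tour}: an array $E$ of length $2n-1$ that lists each node at every moment the traversal enters it or returns to it. Alongside $E$ I would store the height array $D$, where $D[i]=dist(v^1,E[i])$, and for each node $v$ the position $first(v)$ of its first occurrence in $E$. The key observation is that for any two nodes $u$ and $v$, the node $\textsc{Lca}(u,v)$ is exactly the node of minimum height appearing in the segment of $E$ between $first(u)$ and $first(v)$; hence computing the LCA reduces to finding the position of the minimum in a subarray of $D$, i.e.\ a single RMQ query. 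Building $E$, $D$, and $first$ takes $O(n)$ time, which is within the budget of part (i).

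The RMQ instance is special: since consecutive entries of $E$ are a node together with either its parent or one of its children, the height array satisfies $|D[i+1]-D[i]|=1$ for all $i$. I would exploit this $\pm 1$ property with a two-level (block) decomposition. Partition $D$ into consecutive blocks of length $b=\tfrac{1}{2}\log_2(2n-1)$ and compute the minimum of each block; there are $O(n/\log n)$ blocks. On the array of block minima I would build a sparse table, storing for every block and every power-of-two length the position of the minimum over that many consecutive blocks. The sparse table has $O\big((n/\log n)\cdot\log n\big)=O(n)$ entries and answers any query spanning a range of whole blocks in $O(1)$ time by covering that range with two overlapping power-of-two intervals.

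What remains, and what I expect to be the delicate part, is answering the two partial queries that fall inside a single block at each end of the query range in constant time. Here the $\pm 1$ structure is essential: a block is determined, up to an additive shift of all its entries, by its sequence of $b-1$ consecutive $\pm 1$ steps, so there are only $2^{\,b-1}=O(\sqrt{n})$ distinct normalized block types. I would precompute a lookup table that, for each block type and each pair of indices $1\le p\le q\le b$ within a block, stores the offset of the minimum of $D[p..q]$; this table has $O\big(\sqrt{n}\cdot b^2\big)=O\big(\sqrt{n}\,(\log n)^2\big)=o(n)$ entries and is built in the same time, so the overall preprocessing stays $O(n)$. A query is then answered by combining the in-block minimum of the (at most one) partial block on each side with the sparse-table minimum over the enclosed whole blocks and taking the smallest of these $O(1)$ candidates, finally mapping the winning position of $D$ back through $E$ to the node, which is the required LCA. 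Choosing the block size $b$ so that both the sparse table and the type-indexed lookup tables fit into $O(n)$ total work, while keeping every query a fixed constant number of table lookups, is the main technical balance in the argument.
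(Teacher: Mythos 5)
Your proposal is correct: it is precisely the Euler-tour reduction of LCA to a $\pm 1$ range-minimum-query problem, solved by the two-level scheme of $O(n/\log n)$ blocks, a sparse table over block minima, and type-indexed lookup tables for in-block queries — that is, the standard algorithm of Bender and Farach-Colton (simplifying Berkman--Vishkin), which is exactly the content of the references the paper cites for this lemma. The paper itself offers no proof and simply quotes the result as well known, so your reconstruction coincides with the intended source argument, with all the complexity bounds ($O(\sqrt{n}\,(\log n)^2)=o(n)$ table size, $O(1)$ lookups per query) worked out correctly.
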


Let $\textsc{LCA\_Preprocessing}()$ be the subroutine for the preprocessing step. Let $\textsc{LCA}(u,v)$ be the procedure for computing LCA of two nodes $u$ and $v$.
We can compute the distance $dist(v,q)$ between nodes $v$ and $q$ using LCA in $O(1)$. Let $l=\textsc{LCA}(v,q)$ be a lowest common ancestor of $v$ and $q$. Then, $dist(v,q)=dist(v^1,q)+dist(v^1,v)-2\cdot dist(v^1,l)$. 
\subsection{$k$-server Problem on a Tree}
We have a rooted tree $G=(V,E)$. We are also given $k$ servers that can move among nodes of $G$. At each time slot, a query $q\in V$
appears, and we have to ``serve'' it, that is, choose one of our servers and
move it to $q$. Other servers are also allowed to move. Our measure of cost is the distance
by which we move our servers.
In other words, if before the query positions of servers are $v_1,\dots,v_k$ and after the query they are $v'_1,\dots,v'_k$, then $q\in\{v'_1,\dots,v'_k\}$ and the cost of the move is $\sum_{i=1}^k dist(v_i,v'_i)$.
The problem is to design a strategy that minimizes the cost
of serving a sequence of queries given online. 
\subsection{Coloration Problem}\label{sec:coloring-def}
Let us present the coloration problem used as a sub-task in the main algorithm for the $k$-server problem. It is used in the following way. In a tree, we color a node $v$ by a color $j$ if the server $ j$ visits the node. More detailed motivation is presented in the next section.  

{\bf Coloration problem.} Assume that we have a sequence of $d$ nodes $v_1,\dots,v_d$ of the tree $G$. We associate a color $c_i$ with a node $v_i$  of the tree $G$, where $0\leq c_i \leq Z$ for some positive integer $Z$. Initially, all nodes are not colored, i.e. $c_i=0$. We should be able to do several operations. Each operation can be one of three types:

    $\bullet$ {\bf Update.} For three integers $l, r, c$ ($1\leq l\leq r\leq d$, $1\leq c\leq Z$), we should color all elements of segment $[l,r]$ by $c$, i.e. $c_i\gets c$ for $l\leq i \leq r$. 
    
    $\bullet$ {\bf Request.} For an integer $x$ ($1\leq x\leq d$), we should return $c_x$. 
    
    $\bullet$ {\bf Request Closest Colored.}  For two integers $l, r$ ($1\leq l\leq r\leq d$), we should return the minimal and the maximal indexes of colored elements from the segment, i.e. the maximal and the minimal $i$ such that $c_i>0$ and $l\leq i\leq r$. 

We can implement these operations using the segment tree data structure \cite{l2017guide}. The definition of the Segment Tree data structure is presented in Section \ref{sec:st}. Assume that we have several procedures. The procedure $\textsc{ConstructST}(1,d)$ constructs a segment tree . This procedure is used by the initialization process for the coloration problem solution (Lemma \ref{lm:st-construct}). The procedure returns the root of the segment tree. The time complexity is $O(d)$. The procedure $\textsc{ColorRequest}(x,root)$ implements the {\bf Request}  operation for $c_x$ and a segment tree with a root node $root$ (Lemma \ref{lm:st-color}). It  has $O(\log d)$ time complexity. The procedure $\textsc{ColorUpdate}(l,r,c, root)$ implements the {\bf Update} operation for a segment $[l,r]$, a color $c$  and a segment tree with a root node $root$ (Lemma \ref{lm:st-update}). It has $O(\log d)$ time complexity.
The procedure $\textsc{GetClosestColorRight}(l,r,root)$ implements the {\bf Request Closest Colored} operation for a segment $[l,r]$ and a segment tree with a root node $root$ (Lemma \ref{lm:st-get-closest}). It returns the minimal index from the segment and has $O(\log d)$ time complexity.
The  $\textsc{GetClosestColorLeft}(l,r,root)$ procedure has similar properties and returns the maximal index. The more detailed discussion is presented in Section \ref{sec:st}.

\section{The Fast Online Algorithm for $k$-server Problem on Trees}\label{sec:algorithm}
Let us describe a $k$-competitive algorithm for the $k$-server problem on trees from \cite{cl91}.

{\bf Chrobak-Larmore's  $k$-competitive algorithm for the $k$-server problem from \cite{cl91}.} Let us have a query $q$, and let servers be in nodes $v_1,\dots,v_k$.
Let a server $i$ be {\em active} if there is no other servers on the path from $v_i$ to $q$. If several severs in a node, then the server with the smallest index is active. Formally, let us consider a path $P=(w_1,\dots,w_h)$, where $w_1=v_i$ and $w_h=q$. Then, $v_{i'}\not\in P$ for $i'\neq i$. If there is $v_{i'}=v_i$, then the server $i$ is active if $i'>i$.
 In each phase, we move each {\em active} server one step towards the node $q$. After each phase, the set of {\em active} servers can be changed.
 We repeat phases (moves of servers) until one of the servers reaches the query node $q$.
The naive implementation of the algorithm has time complexity $O(n)$ for each query. It can be the following. Firstly, we run the Depth-first search algorithm with time labels \cite{cormen2001}. Using it, we can put labels to each node that allows us to check for any two nodes $u$ and $v$, whether $u$ is an ancestor of $v$ in $O(1)$. After that, we can move each active server to the query step by step. Together all active servers cannot visit more than $O(n)$ nodes.

Here, we present an effective implementation of Chrobak-Larmore's algorithm. The algorithm contains two parts that are preprocessing and query processing. The preprocessing part is done once and has $O(n)$ time complexity (Theorem \ref{th:preproc}). The query processing part is done for each query and has $O\left(k(\log n)^2\right)$ time complexity  (Theorem \ref{th:query-proc}).
 
\subsection{Preprocessing}
We do the following steps for the preprocessing:

    $\bullet$ We construct a Heavy-light decomposition ${\cal P}$ for the tree. Properties of decomposition are described in Section \ref{sec:graph}. Assume that,  for construction ${\cal P}$, we have $\textsc{ConstructingHLD}()$ subroutine.
  
        $\bullet$ We do the required preprocessing for the LCA algorithm that is discussed in Section \ref{sec:graph}. Assume that we have $\textsc{LCA\_Preprocessing}()$ subroutine for this procedure.
        
    $\bullet$ For each path $P\in{\cal P}$ we construct a segment tree that will be used for the coloration problem that is described in Section \ref{sec:coloring-def} and Section \ref{sec:st}. Let  $\textsc{ConstructingSegemntTree}(P)$ be a subroutine for construction a segment tree for the path $P$. Let $ST_P$ be a segment tree for the path $P$.  
    
    $\bullet$ Additionally, for each node $v$ we compute the distance from the $v^1$ (root) node to the node $v$ using $\textsc{ComputeDistance}$ subroutine from Section \ref{sec:graph}.

Finally, we have Algorithm \ref{alg:preproc} for the preprocessing.

\begin{algorithm}[ht]
    \caption{$\textsc{Preprocessing}$. Preprocessing procedure.} \label{alg:preproc}
    \begin{algorithmic}
        \State ${\cal P} \gets \textsc{ConstructHLD()}$
        \State $\textsc{LCA\_Preprocessing}()$
        \For{$P \in {\cal P}$}
        \State $ST_P \gets\textsc{ConstructST}(P)$
        \EndFor
        \State $dist(v^1,v^1)\gets 0$, $\textsc{ComputeDistance()}$
    \end{algorithmic}
\end{algorithm}

\vspace{-0.1cm}
\begin{theorem}\label{th:preproc}
Algorithm \ref{alg:preproc} for the preprocessing has time complexity $O(n)$. (See Appendix \ref{apx:th1})
\end{theorem}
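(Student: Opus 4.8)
The plan is to bound the running time of each of the four steps of Algorithm \ref{alg:preproc} separately and then observe that their sum is linear. Three of the four steps are immediate from results already available in the excerpt: the call $\textsc{ConstructHLD()}$ builds the heavy-light decomposition in $O(n)$ time by the last property listed in Section \ref{sec:graph}; the call $\textsc{LCA\_Preprocessing}()$ runs in $O(n)$ time by part (i) of Lemma \ref{lm:lca}; and the final invocation of $\textsc{ComputeDistance()}$, being a single depth-first traversal of the tree as described in Section \ref{sec:graph}, also costs $O(n)$. I would state these three bounds first, since each is just a citation of an earlier claim.

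The only step requiring genuine argument is the loop that constructs one segment tree per path of the decomposition. Here I would invoke the stated complexity of $\textsc{ConstructST}$, namely that building a segment tree over a path of length $d$ takes $O(d)$ time. The total cost of the loop is therefore $\sum_{P \in \mathcal{P}} O(|P|)$, where $|P|$ denotes the number of nodes on path $P$. The key step is to apply the first property of the heavy-light decomposition from Section \ref{sec:graph}: the paths in $\mathcal{P}$ are pairwise disjoint and cover all $n$ nodes of the tree. Consequently $\sum_{P \in \mathcal{P}} |P| = n$, and so the entire loop runs in $O(n)$ time, despite building many separate segment trees.

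Finally I would combine the four bounds: each of the four phases is $O(n)$, so the total preprocessing time is $O(n) + O(n) + O(n) + O(n) = O(n)$, which establishes the theorem. I anticipate the partition identity $\sum_{P \in \mathcal{P}} |P| = n$ to be the one substantive observation; everything else reduces to quoting the per-operation complexities stated in the preliminaries. No delicate estimation or hidden constant is involved, so the argument should be short once the partition property is made explicit.
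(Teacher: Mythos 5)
Your proof is correct and follows essentially the same route as the paper's: cite the $O(n)$ bounds for heavy-light decomposition construction, LCA preprocessing, and distance computation, and bound the segment-tree loop by $\sum_{P\in\mathcal{P}}O(|P|)=O(n)$ using the fact that the paths of $\mathcal{P}$ partition the node set. If anything, yours is slightly more complete, since the paper's own proof omits explicit mention of the $\textsc{LCA\_Preprocessing}$ step.
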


\vspace{-0.3cm}
\subsection{Query Processing}
Let us have a query on a node $q$, and servers are in nodes $v_1,\dots,v_k$. We make the following steps:

 {\bf Step 1.} Let us sort all servers by the distance to the node $q$. We assume that if two servers have the same distance to the node $q$, then the server with a smaller index should precede the server with a bigger index.  
 Let $\textsc{Sort}(q,v_1,\dots,v_k)$ be a sorting procedure.
     On the following steps we assume that $dist(v_i,q)\leq dist(v_{i+1},q)$ for $i\in\{1,\dots,k-1\}$.
     
     {\bf Step 2.}  The first server from $v_1$ processes the query. We move them to the node $q$  and color all nodes of a path from $v_1$ to $q$ to color $1$. The node's color shows the number of a server that visited the node. 
     Let the coloring process be implemented as a procedure $\textsc{ColorPath}(v_1,q,1)$. The procedure and detailed description of this step is presented in the end of this section.
     
     {\bf Step 3.} For $i\in\{2,\dots k\}$ we consider a server that stays in the $v_i$ node. It becomes inactive when some other server $j$ becomes closer to the query than $i$-th server. It is easy to see that $j<i$ because the distance from $v_j$ to the target $q$ is smaller than $v_i$ to the same target $q$. When $i$-th server becomes inactive, the server $j$ is active. Therefore, the node where $j$-th server was in that moment should have color $j$.
     
     To obtain the index $j$, we search a colored node closest to $v_i$ on the path from $v_i$ to $q$. The color of this node is $j$. Let the search of the closest colored node be implemented as a procedure $\textsc{GetClosestColor}(v_i,q)$. 
     It is described in the end of this section. 
     Let the obtained node be $w$ and its color is $j$. The $j$-th server reaches the node $w$ in $z = dist(v_j,w)$ steps. After that the $i$-th server becomes inactive. So, we should move the $i$-th server to a node $v'_i$ to $z$ steps on the path from $v_i$ to $w$.  Let the moving process be implemented as a procedure $\textsc{Move}(v_i,w,z)$. It is described in the end of this section 
     . Then, we color all nodes on the path from $v_i$ to $v_i'$ to the color $i$. 
     Let us describe the procedure as Algorithm \ref{alg:query}.

\vspace{-0.2cm}     
    \begin{algorithm}[H]
    \caption{$\textsc{Query}(q)$. Query procedure.} \label{alg:query}
    \begin{algorithmic}
        \State $\textsc{Sort}(q,v_1,\dots,v_k)$ 
        \State $\textsc{ColorPath}(v_1,q,1)$
        \State $v'_1\gets q$
        \For{$i\in\{2,\dots,k\}$}
        \State $(w,j)\gets \textsc{GetClosestColor}(v_i,q)$
        \State $z\gets dist(v_j,w)$
        \State $v'_i \gets \textsc{Move}(v_i,w,z)$
        \State $\textsc{ColorPath}(v_i,v'_i,i)$
        \EndFor
    \end{algorithmic}
\end{algorithm}
\subsubsection{Coloring of a Path}\label{sec:coloring}
Let us consider the problem of coloring nodes on a path from a node $v$ to a node $u$. The color is $c$.

Let $l=LCA(v,u)$ be an LCA of $v$ and $u$. Let   $P_1,\dots,P_t\in {\cal P}$ be paths that contain nodes of the path from $v$ to $l$ and let   $P'_1,\dots,P'_{t'}\in {\cal P}$ be paths that contain nodes of the path from $l$ to $u$.
Let 
$ w_0=v,\textrm{ } w_0\in P_1;
    \quad w_1=beg(P_1),\textrm{ } \textsc{Parent}(w_1)\in P_2;
    \quad w_2=beg(P_2),\textrm{ } \textsc{Parent}(w_2)\in P_3;
    \dots
    w_{t-1}=beg(P_{t-1}),\textrm{ } \textsc{Parent}(w_{t-1})\in P_t;
    \quad w_t=l;
$
and 
%
$ w'_0=u,\textrm{ } w'_0\in P'_1;
    \quad w'_1=beg(P'_1),\textrm{ } \textsc{Parent}(w'_1)\in P'_2;
    \quad w'_2=beg(P'_2),$ $ \textsc{Parent}(w'_2)\in P'_3;
    \dots w'_{t-1}=beg(P'_{t'-1}),\textrm{ } \textsc{Parent}(w'_{t'-1})\in P_t';
    \quad w'_{t'}=l.$
Then, the coloring process is two steps:

$\bullet$ $\textsc{ColorUpdate}(index_{P_i}(\textsc{Parent}(w_{i-1})),index_{P_i}(w_i),c,ST_{P_i})$ for $i\in\{2,\dots,t\}$, and $\textsc{ColorUpdate}(index_{P_i}(w_0),index_{P_i}(w_1),c,ST_{P_1})$;

$\bullet$ $\textsc{ColorUpdate}(index_{P'_i}(\textsc{Parent}(w'_{i-1})),index_{P'_i}(w'_o),c,ST_{P'_i})$ for $i\in\{2,\dots,t'\}$, and $\textsc{ColorUpdate}(index_{P'_i}(w'_0),index_{P'_i}(w'_1),c,ST_{P'_1})$.
The procedure is presented as Algorithm \ref{alg:coloring}.

\vspace{-0.2cm}
    \begin{algorithm}[H]
    \caption{$\textsc{ColorPath}(v,u,c)$. Coloring the path between $v$ and $u$.} \label{alg:coloring}
    \begin{algorithmic}
        \State $l \gets \textsc{LCA}(v,u), \quad w\gets v, \quad P\gets P(v)$
        \While{$P\neq P(l)$}
        \State $bw \gets beg(P)$
        \State $\textsc{ColorUpdate}(index_P(w),index_P(bw),c,ST_{P})$
        \State $w\gets \textsc{Parent}(bw), \quad P\gets P(w)$
        \EndWhile
        \State $\textsc{ColorUpdate}(index_P(w),index_P(l),c,ST_{P})$
        
        \State $w\gets u, \quad P\gets P(u)$
        \While{$P\neq P(l)$}
        \State $bw \gets beg(P)$
        \State $\textsc{ColorUpdate}(index_P(w),index_P(bw),c,ST_{P})$
        \State $w\gets \textsc{Parent}(bw), \quad P\gets P(w)$
        \EndWhile
        \State $\textsc{ColorUpdate}(index_P(w),index_P(l),c,ST_{P})$
        
    \end{algorithmic}
\end{algorithm}

\begin{lemma}\label{lm:coloring}
Time complexity of Algorithm \ref{alg:coloring} is $O\left((\log n)^2\right)$.(See Appendix \ref{apx:coloring})
\end{lemma}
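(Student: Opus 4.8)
The plan is to analyze Algorithm \ref{alg:coloring} by bounding the number of loop iterations and the cost of each iteration separately. The key structural fact I would invoke is the second property of the heavy-light decomposition from Section \ref{sec:graph}: the path from any node to the root passes through at most $\log_2 n$ distinct paths of ${\cal P}$. Since the path from $v$ to $u$ decomposes (via their LCA $l$) into the ascending path $v \to l$ and the ascending path $u \to l$, each of these sub-paths crosses at most $\log_2 n$ heavy paths. Hence each of the two \textbf{while} loops in the algorithm executes at most $O(\log n)$ iterations.

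Next I would account for the work done inside a single iteration. Each iteration computes $beg(P)$, looks up $\textsc{Parent}$, determines the containing path $P(w)$, evaluates the $index_P$ function, and — the dominant operation — calls $\textsc{ColorUpdate}$ on the segment tree $ST_P$. By the stated property in Section \ref{sec:coloring-def}, $\textsc{ColorUpdate}$ runs in $O(\log d)$ time, where $d$ is the length of the path $P$; since $d \le n$, this is $O(\log n)$. The remaining per-iteration operations ($beg$, $\textsc{Parent}$, $P(\cdot)$, $index_P$) are all $O(1)$ given the preprocessing, so the cost of one iteration is dominated by the $O(\log n)$ segment-tree update. The single $\textsc{ColorUpdate}$ call following each loop contributes only an additional $O(\log n)$ and does not change the asymptotics.

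Multiplying the two bounds gives the result: at most $O(\log n)$ iterations across both loops, each costing $O(\log n)$, for a total of $O\left((\log n)^2\right)$. I would then remark that the initial $\textsc{LCA}(v,u)$ computation is $O(1)$ by Lemma \ref{lm:lca}, so it is absorbed into the bound and does not affect the total.

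The main obstacle I anticipate is making the iteration-count argument fully rigorous rather than merely plausible. The subtlety is that the heavy-light property bounds the number of distinct heavy paths on a root-to-node path, but the loop advances via $w \gets \textsc{Parent}(beg(P))$, so I must verify that each iteration strictly moves to a \emph{new} heavy path closer to $l$ and that the loop terminates exactly when $P = P(l)$. In particular I would confirm that $\textsc{Parent}(beg(P))$ lies on a strictly higher heavy path (ensuring progress) and that the number of such steps is bounded by the number of distinct paths of ${\cal P}$ met along $v \to l$, which is at most $\log_2 n$. Once this correspondence between loop iterations and crossed heavy paths is pinned down, the complexity bound follows immediately.
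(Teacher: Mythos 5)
Your proposal is correct and follows essentially the same argument as the paper's proof: both bound the number of heavy paths crossed ($t, t' = O(\log n)$ by the heavy-light decomposition property) and multiply by the $O(\log n)$ cost per $\textsc{ColorUpdate}$ call from Lemma \ref{lm:st-update}. Your additional care about verifying that each loop iteration strictly advances to a new heavy path is a sensible rigor check, but it is the same proof at heart.
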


\subsubsection{The Search of the Closest Colored Node}\label{sec:get-color}
Let us consider the problem of searching the closest colored node on the path from $v$ to $u$. The idea is similar to the idea from the previous section.
Let $l=LCA(v,u)$ be a LCA of $v$ and $u$. Let   $P_1,\dots,P_t\in {\cal P}$ be paths that contain nodes of a path from $v$ to $l$ and let   $P'_1,\dots,P'_{t'}\in {\cal P}$ be paths that contain nodes of a path from $l$ to $u$.
Let
$ w_0=v,\textrm{ } w_0\in P_1;
    \textrm{ } w_1=beg(P_1),\textrm{ } \textsc{Parent}(w_1)\in P_2;
    \textrm{ } w_2=beg(P_2),\textrm{ } \textsc{Parent}(w_2)\in P_3;
    \dots;
        w_{t-1}=beg(P_{t-1}),\textrm{ } \textsc{Parent}(w_{t-1})\in P_t;
    \textrm{ } w_t=l;
 $ and 
$ w'_0=u,\textrm{ } w'_0\in P'_1;
    \textrm{ } w'_1=beg(P'_1),\textrm{ } \textsc{Parent}(w'_1)\in P'_2;
    \textrm{ } w'_2=beg(P'_2),\textrm{ } \textsc{Parent}(w'_2)\in P'_3;
    \dots;
    w'_{t-1}=beg(P'_{t'-1}),\textrm{ } \textsc{Parent}(w'_{t'-1})\in P_t';
    \textrm{ } w'_{t'}=l.
 $ 
For the searching process, firstly,
we invoke the procedure $\textsc{GetClosestColorRight}(index_{P_i}(w_0),index_{P_i}(w_i),ST_{P_i})$. Assume that the procedure returns $NULL$ if there are no colored nodes in the segment. If the procedure returns $NULL$, then we invoke the procedure $\textsc{GetClosestColorRight}(index_{P_i}(\textsc{Parent}(w_{i-1})),index_{P_i}(w_i),ST_{P_i})$ for $i\in\{2,\dots,t\}$. We stop on the minimal $i$ such that the result is not $NULL$. If all of them are $NULL$, then we continue. 
Then, we invoke the procedure

\noindent
$\textsc{GetClosestColorLeft}(index_{P'_i}(\textsc{Parent}(w'_{i-1})),index_{P'_i}(w_i),ST_{P'_i})$
for $i\in\{t',\dots,2\}$. We stop on the maximal $i$ such that a result is not $NULL$. If all of them $NULL$, then we invoke $\textsc{GetClosestColorLeft}(index_{P'_1}(w'_{0})),index_{P'_1}(w_1),ST_{P'_1})$.
The procedure is presented as Algorithm \ref{alg:close-color}.
    \begin{algorithm}[H]
    \caption{$\textsc{GetClosestColor}(v,u)$. Getting the closest colored vertex on the path between $v$ and $u$.} \label{alg:close-color}
    \begin{algorithmic}
        \State $l \gets \textsc{LCA}(v,u),\quad w\gets v,\quad P\gets P(v),\quad g\gets NULL$
        \While{$g=NULL$ and $P\neq P(l)$}
         \State $bw \gets beg(P)$
        \State $g\gets\textsc{GetClosestColorRight}(index_P(w),index_P(bw),ST_{P})$
        \If{$g=NULL$}
        \State $w\gets \textsc{Parent}(bw), \quad P\gets P(w)$
        \EndIf
        \EndWhile
        \If{$g=NULL$}
         \State $g\gets\textsc{GetClosestColorRight}(index_P(w),index_P(l),ST_{P})$
         \EndIf
         \If{$g=NULL$}
          \State $i\gets 0, \quad w'_i\gets u, \quad P\gets P(u)$
       
        \While{$P\neq P(l)$}
        \State $i\gets i+1, \quad w'_i \gets beg(P), \quad bw\gets \textsc{Parent}(w'_i), \quad P\gets P(bw)$
        \EndWhile
            \State $g\gets\textsc{GetClosestColorLeft}(index_P(\textsc{Parent}(w'_i)),index_P(l),ST_{P})$
        \While{$g=NULL$}
         \State $P\gets P(w_i), \quad bw \gets \textsc{Parent}(w'_{i-1})$
        \State $g\gets\textsc{GetClosestColorLeft}(index_P(bw),index_P(w'_i),ST_{P})$
        \State $i\gets i-1$
        \EndWhile
        \EndIf
        \State $resW\gets node_P(g), \quad j\gets \textsc{ColorRequest}(g,ST_P)$
        \State\Return{$(resW,j)$}
    \end{algorithmic}
\end{algorithm}

Let us discuss time complexity of the algorithm.
\begin{lemma}\label{lm:close-color}
Time complexity of Algorithm \ref{alg:close-color} is $O\left((\log n)^2\right)$.
\end{lemma}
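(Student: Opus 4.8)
The plan is to bound the running time of Algorithm \ref{alg:close-color} by analyzing its three structural components and invoking the properties of the heavy-light decomposition together with the per-operation cost of the segment tree subroutines. The algorithm traverses the path from $v$ to $u$ by walking up from $v$ to $l=\textsc{LCA}(v,u)$, then down from $u$ to $l$, and on each heavy path it touches it performs a constant number of $\textsc{GetClosestColorRight}$ or $\textsc{GetClosestColorLeft}$ queries. The key quantitative facts I would assemble are: (i) by the heavy-light decomposition property stated in Section \ref{sec:graph}, the path from any node to the root meets at most $\log_2 n$ paths of ${\cal P}$, so the $v$-to-$l$ and $u$-to-$l$ portions together intersect $O(\log n)$ paths, hence $t,t'=O(\log n)$; (ii) each segment-tree query $\textsc{GetClosestColorRight}$ / $\textsc{GetClosestColorLeft}$ costs $O(\log d)=O(\log n)$ by the Lemmas referenced in Section \ref{sec:coloring-def}; and (iii) $\textsc{LCA}(v,u)$ costs $O(1)$ by Lemma \ref{lm:lca}, while each $beg(\cdot)$, $\textsc{Parent}(\cdot)$, $P(\cdot)$, and $index_P(\cdot)$ lookup is $O(1)$.

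The main step is to argue that the total number of segment-tree queries is $O(\log n)$ rather than something larger. I would observe that every iteration of each \textbf{while} loop advances the current path $P$ to a strictly higher path in the decomposition (via $w\gets\textsc{Parent}(bw)$ moving to the parent of the top of the current heavy path, or the symmetric downward walk), so no heavy path is visited twice on a given branch; by property (i) this caps the number of iterations of the upward loop by $t=O(\log n)$ and of the downward loop by $t'=O(\log n)$. Within a single iteration only a constant number of $O(\log n)$-time segment-tree calls and $O(1)$-time decomposition lookups are made. Multiplying the $O(\log n)$ iterations by the $O(\log n)$ cost per iteration yields $O((\log n)^2)$, and the trailing $\textsc{ColorRequest}(g,ST_P)$ together with the final $node_P$ lookup add only $O(\log n)+O(1)$, which is absorbed.

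The structure of the proof therefore mirrors exactly that of Lemma \ref{lm:coloring}, where $\textsc{ColorPath}$ runs in $O((\log n)^2)$ for the same reason: $O(\log n)$ heavy paths, each contributing one $O(\log n)$ segment-tree update. The only genuine subtlety, and the step I expect to require the most care, is verifying that the loop-control logic in Algorithm \ref{alg:close-color} actually enumerates the decomposition paths correctly and terminates: in particular that the second (downward) \textbf{while} loop, which first descends from $u$ to collect the $w'_i$ and then walks back up issuing $\textsc{GetClosestColorLeft}$ queries, does not re-query a path or overshoot $l$. I would confirm termination by noting that the descent loop sets $i$ to exactly $t'$ (the number of paths on the $l$-to-$u$ branch) and the ascent loop decrements $i$ while each query returns $NULL$, so it executes at most $t'=O(\log n)$ times before either finding a colored node or exhausting the branch. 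Once correctness of the traversal is granted, the cost bound follows immediately from the per-operation complexities, completing the $O((\log n)^2)$ estimate.
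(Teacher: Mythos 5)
Your proof is correct and follows essentially the same route as the paper's: bound the number of heavy paths on the $v$--$l$ and $u$--$l$ branches by $O(\log n)$ via the heavy-light decomposition, charge each path a constant number of $O(\log n)$-time segment-tree queries, and multiply. The extra care you devote to termination of the downward loop and to the trailing $\textsc{ColorRequest}$ call is sound but not needed beyond what the paper's two-line argument already implies.
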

\begin{proof}
Due to properties of Heavy-light decomposition from Section \ref{sec:graph}, we have $t, t'= O(\log n)$. Due to results from Section \ref{sec:coloring-def}, each invocation of $\textsc{GetClosestColorLeft}$ or $\textsc{GetClosestColorRight}$ for $P$ has time complexity $O(\log |P|)=O(\log n)$. So, the total time complexity is $O\left((\log n)^2 \right)$. 
\Endproof\end{proof}
\subsubsection{Moving of a Server}\label{sec:move}
Let us consider a moving of a server from $v$ to a distance $g$ on a path from $v$ to $u$. The idea is similar to the idea from the previous section.
Let $l=LCA(v,u)$ be a LCA of $v$ and $u$. Let  $P_1,\dots,P_t\in {\cal P}$ be paths that contains nodes of a path from $v$ to $l$ and let   $P'_1,\dots,P'_{t'}\in {\cal P}$ be paths that contains nodes of a path from $l$ to $u$.
Let 
$ w_0=v,\textrm{ } w_0\in P_1;
    \textrm{ } w_1=beg(P_1),\textrm{ } \textsc{Parent}(w_1)\in P_2;
    \textrm{ } w_2=beg(P_2),\textrm{ } \textsc{Parent}(w_2)\in P_3;
    \dots;
        w_{t-1}=beg(P_{t-1}),\textrm{ } \textsc{Parent}(w_{t-1})\in P_t;
    \textrm{ } w_t=l;
 $ and 
$ w'_0=u,\textrm{ } w'_0\in P'_1;
    \textrm{ } w'_1=beg(P'_1),\textrm{ } \textsc{Parent}(w'_1)\in P'_2;
    \textrm{ } w'_2=beg(P'_2),\textrm{ } \textsc{Parent}(w'_2)\in P'_3;
    \dots;
    w'_{t-1}=beg(P'_{t'-1}),\textrm{ } \textsc{Parent}(w'_{t'-1})\in P_t';
    \textrm{ } w'_{t'}=l.
 $ 
Then, the moving process is the following.
We check whether distance $dist(\textsc{Parent}(w_{i-1}),w_{i})\leq g$. If $dist(\textsc{Parent}(w_{i-1}),w_{i}))\leq g$, then we can return the node $node_{P_i}(index_{P_i}(\textsc{Parent}(w_{i-1}))+g)$ as a result and stop the process. Otherwise, we reduce $g\gets g-dist(\textsc{Parent}(w_{i-1}),w_{i})-1$ and move to the next $i$, i.e. $i\gets i+1$. We do it for $i\in\{1,\dots,t\}$.

If $g>0$, then we continue with the path from $l$ to $u$. 
We check whether  $dist(\textsc{Parent}(w'_{i-1}),w'_{i})\leq g$. If $dist(\textsc{Parent}(w'_{i-1}),w'_{i}))\leq g$, then we can return the node $node_{P'_i}(index_{P'_i}(w'_i)-g)$ as a result and stop the process. Otherwise, we reduce $t\gets t-dist(\textsc{Parent}(w'_{i-1}),w'_{i})-1$ and move to the previous $i$, i.e. $i\gets i-1$. We do it for $i\in\{t',\dots,1\}$.

\begin{lemma}\label{lm:move}
Time complexity of the moving is $O\left(\log n\right)$.
\end{lemma}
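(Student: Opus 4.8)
The plan is to analyze the structure of the moving procedure described immediately before the lemma, which traverses the path from $v$ to $u$ (passing through $l = \textsc{LCA}(v,u)$) segment-tree by segment-tree, accumulating distance until the budget $g$ is exhausted. The key observation is that this traversal visits the paths $P_1,\dots,P_t$ and then possibly $P'_1,\dots,P'_{t'}$ from the heavy-light decomposition in sequence, and the work done at each path is constant: we compute a single distance $dist(\textsc{Parent}(w_{i-1}),w_i)$, compare it to $g$, and either return a node via $node_{P_i}(\cdot)$ and $index_{P_i}(\cdot)$ or decrement $g$ and advance the index.

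First I would recall that by the heavy-light decomposition property from Section \ref{sec:graph}, the path from any node to the root meets at most $\log_2 n$ distinct paths of ${\cal P}$; since the path from $v$ to $u$ decomposes into the $v$-to-$l$ portion and the $l$-to-$u$ portion, each of which lies on a root-path, we obtain $t, t' = O(\log n)$. Hence the main loop over the $P_i$ and the secondary loop over the $P'_i$ together execute $O(\log n)$ iterations. Next I would verify that each iteration costs $O(1)$: the distance between two nodes is computed in $O(1)$ using $dist(v^1,\cdot)$ and the LCA as noted at the end of Section \ref{sec:graph} (and in fact within a single decomposition path the distance is just a difference of $index$ values, which is also $O(1)$); the operations $index_P(\cdot)$ and $node_P(\cdot)$ are $O(1)$ array lookups by the heavy-light decomposition setup; and $\textsc{Parent}$, $P(\cdot)$, and $beg(\cdot)$ are likewise $O(1)$. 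Multiplying $O(\log n)$ iterations by $O(1)$ per iteration yields the claimed $O(\log n)$ bound.

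The contrast with Lemmas \ref{lm:coloring} and \ref{lm:close-color} is worth flagging explicitly: those procedures invoke a segment-tree operation ($\textsc{ColorUpdate}$, $\textsc{GetClosestColorRight/Left}$) at every one of the $O(\log n)$ decomposition paths, each costing $O(\log n)$, giving $O((\log n)^2)$; here, crucially, the moving procedure touches only the index/distance metadata and never performs a logarithmic-time segment-tree query, so we save a factor of $\log n$ and land at $O(\log n)$ rather than $O((\log n)^2)$.

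The main obstacle I anticipate is not the counting itself but justifying that every primitive step really is $O(1)$, and in particular that the procedure terminates after examining at most $O(\log n)$ paths rather than, say, re-scanning within a path node by node. The danger would be if the implementation stepped along individual nodes of each path to locate the stopping point; I would argue instead that the stopping node is found in one shot via the arithmetic $node_{P_i}(index_{P_i}(\textsc{Parent}(w_{i-1}))+g)$ (and symmetrically $node_{P'_i}(index_{P'_i}(w'_i)-g)$), so the residual budget $g$ is resolved by a single index computation within the terminal path rather than by a node-by-node walk. Once this is established, the bound follows directly.

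\Endproof
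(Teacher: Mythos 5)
Your proof is correct and follows the same route as the paper's: by the heavy-light decomposition property $t,t'=O(\log n)$, each decomposition path is processed in $O(1)$ time, giving $O(\log n)$ overall. Your additional justification of the $O(1)$ per-path cost (constant-time distance via precomputed $dist(v^1,\cdot)$ and index arithmetic, rather than a node-by-node walk) is a welcome elaboration of what the paper states without detail, but it is the same argument.
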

\begin{proof}
Due to properties of Heavy-light decomposition from Section \ref{sec:graph}, we have $t, t'= O(\log n)$. The time complexity for processing of each path is $O(1)$. So, the total time complexity is $O\left(\log n\right)$. 
\Endproof\end{proof}
\subsubsection{Correctness and Complexity of the Query Processing}\label{sec:query-compl}
\begin{theorem}\label{th:query-proc}
The query processing Algorithm \ref{alg:query}  has time complexity $O\left(k(\log n)^2\right)$.
\end{theorem}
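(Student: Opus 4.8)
The plan is to analyze Algorithm~\ref{alg:query} line by line, bounding the cost of each operation and then summing over the main loop. The query procedure first sorts the $k$ servers by distance to $q$, then performs one coloring of the path from $v_1$ to $q$, and finally runs a loop over $i\in\{2,\dots,k\}$ that for each $i$ invokes $\textsc{GetClosestColor}$, computes a distance, invokes $\textsc{Move}$, and invokes $\textsc{ColorPath}$.

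First I would account for the sorting step. Using a standard comparison sort on $k$ elements, $\textsc{Sort}$ costs $O(k\log k)$; since each comparison is a distance evaluation of cost $O(1)$ (by the LCA-based distance formula established in Section~\ref{sec:graph}), this is $O(k\log k)$ total. Note that $k\leq n$, so $\log k \le \log n$ and this term is dominated by the loop cost computed below. Next I would invoke the already-proven per-operation bounds: by Lemma~\ref{lm:coloring} each call to $\textsc{ColorPath}$ costs $O\left((\log n)^2\right)$, by Lemma~\ref{lm:close-color} each call to $\textsc{GetClosestColor}$ costs $O\left((\log n)^2\right)$, and by Lemma~\ref{lm:move} each call to $\textsc{Move}$ costs $O(\log n)$. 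The distance computation $z\gets dist(v_j,w)$ costs $O(1)$.

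Then I would bound the loop. Each of the $k-1$ iterations performs one $\textsc{GetClosestColor}$, one distance computation, one $\textsc{Move}$, and one $\textsc{ColorPath}$, so a single iteration costs
$O\left((\log n)^2\right)+O(1)+O(\log n)+O\left((\log n)^2\right)=O\left((\log n)^2\right)$.
Summing over the $O(k)$ iterations gives $O\left(k(\log n)^2\right)$ for the loop, and adding the one preliminary $\textsc{ColorPath}$ call (cost $O\left((\log n)^2\right)$) and the sorting step (cost $O(k\log k)=O(k\log n)$) yields a grand total of $O\left(k(\log n)^2\right)$, as claimed.

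The only genuinely nontrivial point is justifying that the work really is the claimed sum rather than something larger hidden inside the loop: a priori one might worry that moving active servers step-by-step costs more, but the key observation is that the fast implementation never simulates the individual phases of Chrobak-Larmore's algorithm. Instead, each server's final position is determined in one shot via $\textsc{GetClosestColor}$ and $\textsc{Move}$, so the total cost is tied to the number of servers $k$ and the heavy-light decomposition depth $O(\log n)$, not to the number of steps any server travels. Hence the main obstacle is conceptual — recognizing that the per-server work is bounded independently of path lengths — whereas once the three per-operation lemmas are in hand the complexity bound follows by straightforward summation.
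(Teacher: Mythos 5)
Your proposal is correct and follows essentially the same argument as the paper's own proof: it bounds the sorting step by $O(k\log k)$, invokes Lemmas \ref{lm:coloring}, \ref{lm:close-color} and \ref{lm:move} to get an $O\left((\log n)^2\right)$ cost per server, and sums over the $k$ servers, concluding $O\left(k\log k + k(\log n)^2\right)=O\left(k(\log n)^2\right)$ since $k<n$. Your additional remarks (the $O(1)$ LCA-based distance comparisons in sorting, and the observation that the implementation never simulates individual phases) are sound elaborations but do not change the approach.
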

\begin{proof}
The complexity of servers sorting by distance is $O(k\log k)$. Due to Lemma \ref{lm:coloring}, Lemma \ref{lm:close-color} and Lemma \ref{lm:move}, the complexity for processing one server is $O\left(\log n + (\log n)^2+(\log n)^2 \right)=O\left( (\log n)^2 \right)$. So, the total complexity of processing all servers is $O\left(k\log k +k(\log n)^2\right)=O\left(k(\log n)^2\right)$ because $k<n$. 
\Endproof\end{proof}

\section{Segment Tree with Range Updates for Coloration Problem}\label{sec:st}
In the paper, we use a segment tree with range updates for Coloration Problem (Section \ref{sec:coloring-def}) as one of the main tools for the main algorithm. The data structure allows us to do the main operations for the coloration problem with logarithmic time complexity. As a book with a description of the data structure \cite{l2017guide} can be used.

Firstly, let us describe the segment tree data structure. It is the full binary tree of height $h$ such that $2^{h-1}<d\leq 2^h$. The data structure works with the sequence of elements of the length $2^h$, but we are care only about the first $d$ elements.
Each node of the tree is associated with some segment $[a,b]$ such that $1\leq a\leq b\leq 2^h$.   
Each leaf is associated with elements of the sequence or we can say that it is associated with a segment of size $1$. $i$-th node of the last level is associated with a segment $[i,i]$. Let us consider an internal node $v$ and its two children $u$ and $w$. Then, $u$ is associated with a segment $[a,q]$, $w$ is associated with a segment $[q+1,b]$, and $v$ is associated with a segment $[a,b]$ for some $1\leq a\leq q<b\leq 2^h$. Note that because of the structure of the tree, we have $q=(a+b)/2$.

Each node $v$ of the segment tree is labeled by a color $C(v)$, where $0\leq C(v)\leq Z$. Assume that $v$ is associated with a segment $[a,b]$. If $C(v)=0$, then the segment $[a,b]$  is not colored at all or it has not a single color. If $1\leq C(v)\leq Z$, then the segment has a single color $C(v)$, i.e. $c_a=C(v),\dots, c_b=C(v)$.
Additionally, we add two labels $Max(v)$ and $Min(v)$. $a\leq Max(v) \leq b$ is the maximal index of a colored element of the segment. $a\leq Min(v) \leq b$ is the minimal index of a colored element of the segment. Initially, $Max(v)\gets-1$, $Min(v)\gets 2^h+1$.

For a node $v$ and the associated segment $[a,b]$, we use the following notation. 
    
    $\bullet$ $\textsc{Left}(v)$ is the left border of the segment. $\textsc{Left}(v)=a$
    
    $\bullet$ $\textsc{Right}(v)$ is the right border of the segment. $\textsc{Right}(v)=b$
    
    $\bullet$ $\textsc{LeftChild}(v)$ is the left child of $v$.
    
    $\bullet$ $\textsc{RightChild}(v)$ is the right child of $v$.
 
 Let $\textsc{ConstructST}(a,b)$ be a procedure that returns the root of a segment tree for a segment $[a,b]$. The procedure is standard and has a property that described in Lemma \ref{lm:st-construct}. Let us present the description and the proof of the lemma in Appendix \ref{apx:st-constract} for completeness.
\begin{lemma}\label{lm:st-construct}
Time complexity of the  segment tree constructing procedure $\textsc{ConstructST}(1,d)$ is $O(d)$.
\end{lemma}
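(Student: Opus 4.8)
The plan is to analyze the standard recursive construction of a segment tree over the segment $[1,d]$ and bound the total work by the number of nodes created. The procedure $\textsc{ConstructST}(a,b)$ operates recursively: if $a=b$ it creates a single leaf node (associated with the unit segment $[a,a]$) in $O(1)$ time, initializing its labels $C(v)\gets 0$, $Max(v)\gets -1$, $Min(v)\gets 2^h+1$; otherwise it sets $q=\lfloor(a+b)/2\rfloor$, recursively constructs the left subtree on $[a,q]$ and the right subtree on $[q+1,b]$, creates the internal node $v$, links its two children via $\textsc{LeftChild}(v)$ and $\textsc{RightChild}(v)$, and performs $O(1)$ additional work to set the labels of $v$. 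The key observation is that the work done at each node, excluding the recursive calls, is constant, so the total running time is proportional to the number of nodes in the tree.

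First I would establish that the tree built on $[1,d]$ has $O(d)$ nodes. Since the tree is (conceptually) the full binary tree of height $h$ with $2^{h-1}<d\leq 2^h$, the number of leaves is at most $2^h<2d$, and a binary tree with $L$ leaves has exactly $L-1$ internal nodes, so the total node count is at most $2L-1<4d=O(d)$. Formally I would argue by a recurrence: let $T(m)$ denote the time to construct a segment tree over a segment of length $m=b-a+1$. Then $T(1)=O(1)$ and $T(m)=T(\lceil m/2\rceil)+T(\lfloor m/2\rfloor)+O(1)$ for $m>1$. The solution of this recurrence is $T(m)=O(m)$, which can be verified by induction: assuming $T(m')\leq c\cdot m'$ for all $m'<m$, the two recursive calls cost at most $c\lceil m/2\rceil + c\lfloor m/2\rfloor = c\cdot m$, plus the constant overhead, giving a linear bound after adjusting the constant.

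Applying this with $m=d$ yields $T(d)=O(d)$, which is the claimed time complexity of $\textsc{ConstructST}(1,d)$. I do not expect any genuine obstacle here, as the result is entirely standard; the only point requiring a little care is confirming that the per-node initialization of all three labels $C(v)$, $Max(v)$, and $Min(v)$ is indeed $O(1)$ and that the splitting index $q=(a+b)/2$ guarantees the two halves differ in length by at most one, so that the recurrence has the stated balanced form and the full binary tree structure described in Section \ref{sec:st} is respected.
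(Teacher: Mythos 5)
Your proposal is correct and takes essentially the same approach as the paper: constant ($O(1)$) work per node plus a linear bound on the total number of nodes in the tree. The paper's proof simply sums the node counts level by level, $O\left(\sum_{i=0}^{h} 2^i\right)=O(2^h)=O(d)$, while your leaf-counting and recurrence-with-induction arguments are just a more formal packaging of that same counting.
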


Let us describe the processing of three types of operations.

{\bf Request.} The operation is requesting $c_x$ for some $1\leq x\leq 2^h$. We start with the root node of the segment tree. Assume that we observe a node $v$. If $C(v)=0$, then we go to the child that is associated with a segment $[a,b]$, where $a\leq x \leq b$. We continue this process until we meet $v$ such that $C(v)\geq 1$ or $v$ is a leaf. If $C(v)\geq 1$, then the result is $C(v)$. If $C(v)=0$ and $v$ is a leaf, then $c_x$ is not assigned yet. Let a name of the procedure be $\textsc{ColorRequest}(x,root)$. It is a request for a color $c_x$ from a segment tree with $root$ node as the root. If the color is not assigned, then the procedure returns $0$. The implementation of the procedure is presented in Appendix \ref{apx:st-update} and properties are presented in the following lemma. 
\begin{lemma}\label{lm:st-color}
The request color procedure $\textsc{ColorRequest}$ is correct and has $O(\log d)$ time complexity. (See Appendix \ref{apx:st-proofs})
\end{lemma}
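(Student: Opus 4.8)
The plan is to analyze the traversal that \textsc{ColorRequest} performs from the root of the segment tree down toward the leaf associated with index $x$, and to argue both that this traversal returns the correct value and that it visits only $O(\log d)$ nodes. The key structural fact I would rely on is an invariant that is maintained by the \textbf{Update} operation: whenever a node $v$ has $C(v)\geq 1$, the \emph{entire} segment $[a,b]$ associated with $v$ carries the single color $C(v)$, i.e. $c_a=\dots=c_b=C(v)$; and whenever $C(v)=0$, the color information (if any) is stored strictly below $v$. This invariant is exactly what the definition of the labels $C(v)$ in Section~\ref{sec:st} asserts, so for the correctness argument I would take it as the specification that \textsc{ColorUpdate} is designed to preserve (its own correctness being the subject of Lemma~\ref{lm:st-update}).

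For correctness, I would argue as follows. The procedure starts at the root and, as long as the current node $v$ has $C(v)=0$, descends into the unique child whose associated segment contains $x$ (this child is well-defined because the two children partition $[a,b]$ at $q=(a+b)/2$). Two stopping conditions are possible. If the descent reaches a node $v$ with $C(v)\geq 1$, then by the invariant every index in $v$'s segment, and in particular $x$, has color $C(v)$; hence returning $C(v)$ is correct. If instead the descent reaches a leaf with $C(v)=0$, that leaf is associated with the singleton segment $[x,x]$, and $C(v)=0$ on a singleton means $c_x$ has not been assigned, so returning $0$ is correct. Thus in every case the returned value equals $c_x$ (with the convention that an unassigned color is reported as $0$).

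For the time bound, the essential observation is that the traversal follows a single root-to-node path in the segment tree and performs $O(1)$ work (one comparison of $C(v)$ with $0$ and one choice of child) at each visited node. The length of any such path is at most the height $h$ of the tree. Since the tree is a full binary tree with $2^{h-1}<d\leq 2^h$, we have $h=\lceil\log_2 d\rceil = O(\log d)$, and therefore the total time complexity is $O(\log d)$.

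I do not anticipate a serious obstacle here, since this is a standard segment-tree query; the only point requiring care is making the dependence on \textsc{ColorUpdate} explicit. The correctness of \textsc{ColorRequest} is not self-contained: it relies on the color-propagation invariant being preserved by updates. I would therefore state this invariant precisely and either invoke Lemma~\ref{lm:st-update} for its maintenance or note that it follows by induction on the sequence of operations. The complexity half of the lemma is immediate from the single-path, constant-work-per-node structure and needs no such dependency.
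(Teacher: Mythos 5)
Your proposal is correct and follows essentially the same route as the paper's proof: correctness is conditioned on the segment tree maintaining valid color labels (the paper phrases this as ``if the segment tree stores correct colors for segments''), and the $O(\log d)$ bound comes from the single root-to-leaf descent in a full binary tree of height $h$ with $2^{h-1}< d\leq 2^h$. Your version is merely more explicit about the invariant and its dependence on \textsc{ColorUpdate}, which the paper leaves implicit.
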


{\bf Update.} Assume that we want to color a segment $[l,r]$ in a color $c$ , where $1\leq c\leq Z$ , $1\leq l\leq r\leq 2^h$. 
Let us describe two specific cases. The first one is the coloring of a prefix and the second one is the coloring of a suffix.
Let us have a segment tree with the root node $root$. Let us consider the general case, where a segment $[q,t]$ is associated with the node $root$.

Firstly, assume that $[l,r]$ is a prefix of $[q,t]$, i.e. $q=l$ and $q\leq r \leq t$.
Let us observe a node $v$ and an associated segment $[a,b]$. If $v$ is a leaf, then we assign $C(v)\gets c$ and stop. Otherwise, we continue. We use a variable $c'$ for an existing color. Initially $c'\gets 0$. If on some step $C(v)\geq 1$ and $c'=0$, then we assign $c'\gets C(v)$. If $c'\geq 1$ or $C(v)=0$, then we do not change $c'$ because we already have a color for the segment from an ancestor.

Let $u$ be the left child of $v$, and let $w$ be the right child of $v$. We update $Max(v)\gets\max(Max(v),r)$, $Min(v)\gets \textsc{Left}(v)$ because $[l,r]$ is a prefix. Then, we do the following action.
    
    $\bullet$ If $r\in [a, (a+b)/2]$, then we go to the left child $u$. Additionally, if $c'\geq 1$, then we color $C(w)\gets c'$ because a segment of $w$ has no intersection with $[l,r]$ and keeps its color $c'$.
    
    $\bullet$ If $r\in [(a+b)/2+1, b]$, then we go to the right child $w$. Additionally, we color $C(u)\gets c$ and update $Min(u)\gets\textsc{Left}(u)$, $Max(u)\gets\textsc{Right}(u)$ because $[a, (a+b)/2]$ of $u$ is a subsegment of $[l,r]$. Additionally, we update $l\gets (a+b)/2+1$ because the segment $[a,(a+b)/2]$ is colored and the segment $[(a+b)/2+1,r]$ is left. The new segment is a prefix of the segment tree with the root node $w$.

 Let us call the procedure $\textsc{ColorUpdatePrefix}$ and present it in Appendix \ref{apx:st-algos}.

Secondly, assume that $[l,r]$ is a suffix of $[q,t]$, i.e. $t=r$ and $q\leq l \leq t$. This function is similar to the previous one. The difference is the following.
Let $u$ be the left child of $v$, and let $w$ be the right child of $v$.  We update $Min(v)\gets\min(Min(v),l)$, $Max(v)\gets \textsc{Right}(v)$ because $[l,r]$ is a suffix. Then, we do the following action.

    $\bullet$ If $l\in [(a+b)/2+1, b]$, then we go to the right child $w$. Additionally, if $c'\geq 1$, then we color $C(u)\gets c'$ because a segment of $u$ has no intersection with $[l,r]$ and we keep its color $c'$.
    
    $\bullet$ If $l\in [a, (a+b)/2]$, then we go to the left child $u$. Additionally, we color $C(w)\gets c$ and update $Min(w)\gets\textsc{Left}(w)$, $Max(w)\gets\textsc{Right}(w)$ because $[(a+b)/2+1,b]$ of $w$ is a subsegment of $[l,r]$. Additionally, we update $r\gets (a+b)/2$ because the segment $[(a+b)/2+1,b]$ is colored and the segment $[l, (a+b)/2]$ is left. The new segment is a suffix of the segment tree with the root node $u$. 

 Let us call this procedure $\textsc{ColorUpdateSuffix}$ and present it in Appendix \ref{apx:st-algos}. 

Finally, let us consider a general case for $[l,r]$, i.e. $q\leq l \leq r \leq t$.
Assume that we observe a node $v$ and an associated segment $[a,b]$. If $v$ is a leaf, then we assign $C(v)\gets c$ and stop. Otherwise, we continue. We use a variable $c'$ for an existing color. Initially $c'\gets 0$. If on some step $C(v)\geq 1$ and $c'=0$, then we assign $c'\gets C(v)$. If $c'\geq 1$ or $C(v)=0$, then we do not change $c'$.
We update $Min(v)\gets\min(Min(v),l)$, $Max(v)\gets\max(Max(v),r)$. 

    $\bullet$ If $(a+b)/2+1 \leq l\leq r\leq b$, then we go to the right child $w$. Additionally, if $c'\geq 1$, then we color $C(u)\gets c'$ because a segment of $u$ has no intersection with $[l,r]$ and keeps its color $c'$.
    
    $\bullet$ If $a \leq l\leq r\leq (a+b)/2$, then we go to the left child $u$. Additionally, if $c'\geq 1$, then we color $C(w)\gets c'$ because a segment of $w$ has no intersection with $[l,r]$ and we keep its color $c'$.
    
    $\bullet$ If $a \leq l\leq (a+b)/2 \leq r\leq b $, then we split our segment to $[l,(a+b)/2]$ and $[(a+b)/2+1,r]$. The segment $[l,(a+b)/2]$ is a suffix of the segment tree with the root $u$. For coloring it, we invoke $\textsc{ColorUpdateSuffix}(l,(a+b)/2,c,c',u)$. The segment $[(a+b)/2+1,r]$ is a prefix of the segment tree with the root $w$. For coloring it, we invoke $\textsc{ColorUpdatePrefix}((a+b)/2+1,r,c,c',w)$.

 Let us call the procedure $\textsc{ColorUpdate}$ and present it in Appendix \ref{apx:st-algos}.
\begin{lemma}\label{lm:st-update}
The update procedure $\textsc{ColorUpdate}$ is correct and has $O(\log d)$ time complexity.
(See Appendix \ref{apx:st-proofs})
\end{lemma}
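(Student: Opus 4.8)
The plan is to prove the two assertions — correctness and the $O(\log d)$ bound — separately, treating the running time as routine and concentrating the effort on the invariant that gives the lazy colour tags their meaning. First I would fix the representation invariant already implicit in \textsc{ColorRequest}: for each index $x$, the true colour $c_x$ equals $C(v)$, where $v$ is the \emph{shallowest} node on the root-to-$x$ path with $C(v)\geq 1$, and $c_x=0$ if no such node exists. I would pair this with the companion invariant for the auxiliary labels: for every node $v$ with segment $[a,b]$, the value $Min(v)$ (resp. $Max(v)$) is the least (resp. greatest) index of $[a,b]$ whose true colour is positive, with the sentinels $2^h+1$ and $-1$ when the segment carries no colour. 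The whole correctness claim is then that one call $\textsc{ColorUpdate}(l,r,c,root)$ sends a state satisfying both invariants to a state that again satisfies them and has $c_i=c$ for exactly $i\in[l,r]$ and $c_i$ unchanged elsewhere.

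Next I would prove the two helper routines correct by induction on the depth of the recursion, under the standing hypothesis that the requested segment is a genuine prefix (for $\textsc{ColorUpdatePrefix}$) or suffix (for $\textsc{ColorUpdateSuffix}$) of the segment of the node currently visited. The heart of the matter is the bookkeeping of the inherited colour $c'$: when the recursion enters a node $v$ whose tag is positive it records that tag in $c'$, and as it descends it writes $c'$ onto the sibling child lying \emph{outside} $[l,r]$ and the new colour $c$ onto any child whose whole segment lies \emph{inside} $[l,r]$, while recursing into the single child that still straddles the boundary. I would verify that this is exactly the lazy push-down of $v$'s uniform colour to its children, so that once $v$'s own tag is superseded the shallowest-nonzero-ancestor reading one level down is unchanged for indices outside $[l,r]$ and equals $c$ inside; and that the prefix assignments $Min(v)\gets\textsc{Left}(v)$ and $Max(v)\gets\max(Max(v),r)$ (and the mirror-image suffix assignments) yield precisely the new extremal coloured indices.

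With the helpers in hand, the general routine follows by a short case analysis on its three branches. In the two branches where $[l,r]$ lies entirely in one half, $\textsc{ColorUpdate}$ recurses into the containing child and paints the other child with $c'$, which is correct by the same push-down reasoning together with the inductive hypothesis; in the straddling branch $[l,r]$ is cut at the midpoint into a suffix of the left child and a prefix of the right child, and correctness is inherited directly from the two helper lemmas. The structural fact that drives the time bound is that the straddling branch is taken \emph{at most once}: after the split the work continues solely inside $\textsc{ColorUpdateSuffix}$ and $\textsc{ColorUpdatePrefix}$, each of which descends monotonically and never splits again. For the running time I would note that the tree has height $h=O(\log d)$ because $2^{h-1}<d\leq 2^h$, that every branch of every routine performs $O(1)$ work and drops exactly one level, and hence that a call decomposes into an $O(h)$ descent to the split node followed by one prefix and one suffix descent, for a total of $O(\log d)$.

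I expect the delicate step to be the invariant-preservation check inside the helper induction: the claim that relocating the visited node's uniform colour to its children — writing $c'$ on the off-path sibling and pushing the node's own tag down — exactly restores the shallowest-nonzero reading. One must confirm that every index outside $[l,r]$ keeps its former colour, which is precisely why the off-path sibling must receive $c'$ rather than be left untouched, and that every index inside $[l,r]$ now reads $c$ from a tag shallow enough to shadow any stale deeper tag, while the $Min$ and $Max$ labels stay consistent both with the freshly coloured prefix or suffix and with any portion of the segment that was already coloured before the operation. Once this push-down lemma is isolated and proved, the three top-level cases and the time bound are mechanical.
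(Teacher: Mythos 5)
Your proposal is correct, and on the time bound it coincides exactly with the paper's own argument: each of \textsc{ColorUpdatePrefix}, \textsc{ColorUpdateSuffix} and \textsc{ColorUpdate} descends one level per iteration, the straddling case occurs at most once and hands off to a single suffix call plus a single prefix call, giving $O(h)=O(\log d)$. Where you genuinely depart from the paper is on correctness. The paper's proof of Lemma \ref{lm:st-update} consists of one sentence (``the algorithm colors a required segment and keeps the color of the rest part''), i.e., correctness is asserted to follow from the description. You instead isolate the representation invariant that gives the lazy tags their meaning --- $c_x$ equals $C(v)$ for the \emph{shallowest} node $v$ on the root-to-$x$ path with $C(v)\geq 1$, together with the stated semantics of $Min$ and $Max$ --- and reduce the lemma to a push-down step preserved inductively along the descent. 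This is the right formalization: it is the only reading under which \textsc{ColorRequest} (Algorithm \ref{alg:st-request}) is meaningful after overlapping updates, since the paper's declared invariant (``$C(v)\geq 1$ implies the whole segment of $v$ has the single color $C(v)$'') cannot survive a partial recoloring of an already-tagged node's segment.

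Carrying out the ``delicate step'' you flag does more than complete the proof: it exposes that the pseudocode as written fails it. Your argument requires the visited node's colour to be \emph{relocated} --- the off-path child receives $c'$, children covered by $[l,r]$ receive $c$, and the visited node's own tag is superseded, i.e., reset to $0$. Algorithms \ref{alg:st-update-prefix}, \ref{alg:st-update-suffix} and \ref{alg:st-update} harvest $c'\gets C(v)$ and write tags onto children, but never clear $C(v)$ on the traversed path. The invariant then genuinely breaks: color $[1,8]$ with $5$ (tracing the code, this leaves the tag $5$ on the node for $[3,4]$), then color $[3,4]$ with $7$; the second update records $c'=5$ at the node for $[3,4]$, writes $7$ only on the two leaves below it, and leaves the stale tag $5$ in place, so \textsc{ColorRequest}$(3,root)$ stops at the node for $[3,4]$ and returns $5$ instead of $7$. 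Hence your proof is valid for the algorithm with the one-line fix $C(v)\gets 0$ at every internal node the routine passes through (surely what is intended), and your invariant-based route is precisely what detects the omission --- the paper's one-sentence correctness claim, which conditions on the tree ``storing correct colors'' without verifying that the update preserves this, cannot.
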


{\bf Request the Closest Colored Element.}
 Assume that we want to get the minimal index of a colored element from a segment $[l,r]$, where  $1\leq l\leq r\leq 2^h$. Let $[q,t]$ be a segment of the root of the segment tree.
Let us describe two specific cases that are requesting from a prefix of $[q,t]$ and requesting from a suffix of $[q,t]$.

Firstly, assume that $[l,r]$ is a prefix of $[q,t]$, i.e. $q=l$ and $q\leq r \leq t$. Assume that we observe a node $v$ and an associated segment $[a,b]$. Let $u$ be the left child of $v$, and let $w$ be the right child of $v$. We do the following action.
    
    $\bullet$ If $r\leq(a+b)/2$,  then we go to the left child $u$. 
    
    $\bullet$ If $r>(a+b)/2$ and $Min(u)$ is assigned (i.e there are colored elements in the left child $u$), then the result is $Min(u)$ and we stop the process.
    
    $\bullet$ If $r>(a+b)/2$ and $Min(u)$ is not assigned (i.e there is no colored element in the left child $u$), then we go to the right child $w$. 

If there is no colored elements in $v$, then the algorithm returns $NULL$. We call the procedure $\textsc{GetClosestColorRightPrefix}(l,r,root)$ and present it in Appendix \ref{apx:st-algos}.

Secondly, assume that $[l,r]$ is a suffix of $[q,t]$, i.e. $t=r$ and $q\leq l \leq t$. Assume that we observe a node $v$ and an associated segment $[a,b]$. Let $u$ be the left child of $v$, and let $w$ be the right child of $v$. We do the following action.

$\bullet$ If $l\geq(a+b)/2+1$, then we go to the right child $w$. 

$\bullet$ If $l\leq(a+b)/2$ and $Min(u)$ is assigned (i.e there are colored elements in the left child $u$), then we go to the left child $u$. 
    
$\bullet$ If $l\leq(a+b)/2$ and $Min(u)$ is not assigned (i.e there is no colored element in the left child $u$), then the result is $Min(w)$ and we stop the process.
    
 If there is no colored elements in $v$, then the algorithm returns $NULL$. We call the procedure $\textsc{GetClosestColorRightSuffix}(l,r,root)$ and present it in Appendix \ref{apx:st-algos}.

Finally, let us consider the general case, i.e. $q\leq l \leq r\leq t$. Assume that we observe a node $v$ and an associated segment $[a,b]$. Let $u$ be the left child of $v$, and let $w$ be the right child of $v$. We do the following action.

    $\bullet$ If $(a+b)/2+1 \leq l\leq r\leq b$, then we go to the right child $w$. 
  
  $\bullet$ If $a \leq l\leq r\leq (a+b)/2$, then we go to the left child $u$.
  
    $\bullet$ If $a \leq l\leq (a+b)/2 \leq r\leq b $, then we split our segment to $[l,(a+b)/2]$ and $[(a+b)/2+1,r]$. The segment $[l,(a+b)/2]$ is a suffix of the segment tree with the root $u$. We  invoke $\textsc{GetClosestColorRightSuffix}(l,(a+b)/2,u)$. If the result is not $NULL$, then there is a colored element in the left child, and we return the result of the procedure. If the result is $NULL$, then there is no colored element in the left children, and only the right children can have the minimal colored element. So, we invoke $\textsc{GetClosestColorRightPrefix}((a+b)/2+1,r,u)$ and we return the result of the procedure.

If there are no colored elements in $v$, then the algorithm returns $NULL$.

We call this function $\textsc{GetClosestColorRight}(l,r,root)$.
We can define the function that returns the maximal index of a colored element symmetrically. We call it $\textsc{GetClosestColorLeft}(l,r,root)$.
\begin{lemma}\label{lm:st-get-closest}
The request the closest colored element procedures $\textsc{GetClosestColorLeft}$ and $\textsc{GetClosestColorRight}$ are correct and have $O(\log d)$ time complexity. (See Appendix \ref{apx:st-proofs})
\end{lemma}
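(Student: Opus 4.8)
The plan is to prove correctness by structural induction on the segment tree and to bound the running time by the length of a single root-to-leaf descent. First I would isolate the invariant that the update procedures of Lemma~\ref{lm:st-update} guarantee, phrasing it exactly as the search needs it: for every node $v$ with associated segment $[a,b]$, either $C(v)\geq 1$, in which case the whole segment is monochromatic and hence every index of $[a,b]$ is colored, or else $Min(v)$ and $Max(v)$ are respectively the smallest and largest colored index recorded strictly below $v$ (with the convention $Min(v)=2^h+1$, $Max(v)=-1$ when no descendant carries a color). That this invariant is preserved follows from Lemma~\ref{lm:st-update}: every range update either colors a node wholesale, setting its $Min$ and $Max$ to the segment borders, or passes through the node, refreshing $Min(v)\gets\min(Min(v),l)$ and $Max(v)\gets\max(Max(v),r)$ while pushing any inherited color onto the off-path sibling.

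Next I would prove the two specialized routines. For $\textsc{GetClosestColorRightPrefix}$, where the query $[l,r]$ is a prefix of the current segment, the claim is that it returns the minimal colored index of $[l,r]$, and this follows by induction on depth: if $r\leq(a+b)/2$ the query lives entirely in the left child and we recurse there; otherwise the left child is fully contained in $[l,r]$, so if it holds any colored element (tested by $Min(u)$ being assigned) then $Min(u)$ is the global minimum and we return it, and if not the answer, should it exist, must lie in the right child, which we then query as a prefix. The argument for $\textsc{GetClosestColorRightSuffix}$ is the mirror image, with the two children and the test on $Min(u)$ swapped. Granting both specialized routines, the general $\textsc{GetClosestColorRight}$ follows: when $[l,r]$ falls entirely in one half we recurse into that child, and when it straddles the midpoint we split it into a suffix of the left child and a prefix of the right child, querying the left child first because any colored element there has a smaller index than any element of the right child. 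The function $\textsc{GetClosestColorLeft}$ is handled by the same induction with $Min$ replaced by $Max$ and left and right interchanged.

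The main obstacle, and the step I would treat most carefully, is the interaction with the lazy coloring. Because a node $v$ may be colored as a whole through $C(v)\geq 1$ without its children's $Min$ and $Max$ ever being touched, a descent that only inspects $Min(u)$ at the children can walk past a monochromatic ancestor and wrongly report $NULL$ at a leaf whose $Min$ was never assigned. Correctness therefore hinges on the descent honoring the $C(v)\geq 1$ branch of the invariant, namely that when the current node is monochromatic the minimal colored index inside the queried sub-range is simply its left endpoint, and I would verify for each of the three cases, and for their prefix and suffix specializations, that this short-circuit is consistent with the $Min$/$Max$ bookkeeping produced by the update.

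Finally, the time bound is the easy part. The general procedure recurses into a single child until the query first straddles a midpoint, tracing one path of length at most $h$; at the straddle point it spawns at most one suffix query and one prefix query, each of which again descends a single path, and neither specialized routine ever branches into two recursive calls. Hence the number of visited nodes is $O(h)=O(\log d)$, since $2^{h-1}<d\leq 2^h$, giving the claimed $O(\log d)$ time complexity for both $\textsc{GetClosestColorRight}$ and $\textsc{GetClosestColorLeft}$.
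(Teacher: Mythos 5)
Your complexity argument coincides exactly with the paper's: its entire proof (Appendix~\ref{apx:st-proofs}) is the single remark that the argument is ``similar to the proof of Lemma~\ref{lm:st-update}'', i.e.\ each routine walks down one root-to-leaf path of the full binary tree, the straddle point spawns at most one suffix and one prefix descent, and the height is $h=O(\log d)$. Where you genuinely diverge is on correctness. The paper dismisses it with ``follows from the description''; you instead formulate the invariant tying $C(v)$, $Min(v)$, $Max(v)$ to the true colored set and run a structural induction over the prefix, suffix, and general cases. That extra care buys something concrete: the ``main obstacle'' you isolate is not a pedantic detail but a real gap in the paper's own procedures. In Algorithms~\ref{alg:st-min-color-prefix} and~\ref{alg:st-min-color-suffix} (and in the textual description of the general case) the search consults only the $Min$ labels and never inspects $C(v)$, while the update procedures color off-path siblings wholesale --- setting $C$, $Min$, $Max$ at that sibling but touching none of its descendants. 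Consequently, a query whose range lies strictly inside such a wholesale-colored node descends into children whose $Min$ is still unassigned and returns $NULL$ even though every element of the range is colored. Concretely, with $d=8$: color $[1,8]$, then ask \textsc{GetClosestColorRight} for $[3,3]$; the node $[3,4]$ is colored wholesale, the descent passes through it into the leaf $[3,3]$, which has $C=0$ and unassigned $Min$, so the code returns $NULL$ although $c_3$ is colored.

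So the short-circuit you demand --- when the current node satisfies $C(v)\geq 1$, return the left endpoint of the queried sub-range immediately --- is not merely a verification step consistent with the bookkeeping; it is a necessary amendment without which the lemma, read against the pseudocode as written, is false. With that amendment your induction goes through and yields the claimed $O(\log d)$ bound unchanged, since the extra check costs $O(1)$ per visited node. In summary: same route as the paper on the time bound, a strictly more rigorous route on correctness, and your rigor exposes a defect that the paper's one-line proof passes over; make the short-circuit an explicit part of the procedures (or of the invariant they are required to satisfy) rather than a condition you hope holds.
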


\section{Conclusion}\label{sec:conclusion}
We discuss the time-efficient implementation of online algorithms for the $k$-server problem on trees. Here we present an algorithm with $O(n)$ time complexity for preprocessing and $O(k(\log n)^2)$ time complexity for processing a query. 
It process a query faster than existing implementations of \cite{cl91} (the naive implementation and \cite{kkmsy2020}) in a case of  $\omega\left((\log n)^2\right)=k=o\left(n/(\log n)^2\right)$. This case is reasonable in practice. 
An open problem is an analysis of time complexity for the randomized algorithm \cite{bbmn2011,bbmn2015}.




%
%
%
%

\bibliographystyle{splncs04}
\bibliography{tcs}
\newpage
\appendix
\section{Segment Tree Constructing Procedure}\label{apx:st-constract}
We can construct a segment tree using a simple recursive procedure. Let $\textsc{ConstructST}(a,b)$ be a procedure that returns the root of a segment tree for a segment $[a,b]$.  We present it in Algorithm \ref{alg:st-construct}.

\begin{algorithm}[ht]
    \caption{$\textsc{ConstructST}(a,b)$. A procedure for constructing a segemtn tree for a segment $[a,b]$} \label{alg:st-construct}
    \begin{algorithmic}
       \State $v\gets$ a new node
       \State $\textsc{Left}(v)\gets a$
       \State $\textsc{Right}(v)\gets b$
       \State $C(v)\gets 0$, $Max(v)\gets-1$, $Min(v)\gets 2^h+1$\Comment{The values for $Min(v)$ and $Max(v)$ mean they are not assigned.}
       \If{$a\neq b$}\Comment{not a leaf}
        \State $\textsc{LeftChild}(v)\gets \textsc{ConstructST}(a,(a+b)/2)$
        \State $\textsc{RightChild}(v)\gets \textsc{ConstructST}((a+b)/2+1, b)$
       \EndIf
    \end{algorithmic}
\end{algorithm}
Let us discuss a property of Algorithm \ref{alg:st-construct}.

\noindent
{\bf Lemma \ref{lm:st-construct}}{\em
 Time complexity of Algorithm \ref{alg:st-construct} is $O(d)$.}
\begin{proof}
We construct each node in $O(1)$. A number of nodes on each next level is twice bigger compering to the previous one. The number of nodes on a level $i$ is $2^i$. The number of levels is $h=\lceil\log_2 d\rceil$. So, the total time complexity is $O(\sum_{i=0}^{h} 2^i)=O(2^{h})=O(d)$.
\Endproof\end{proof}

\section{Request Operation for a Segment Tree.}\label{apx:st-update} Assume that we want to get $c_x$ for some $1\leq x\leq 2^h$. We start with the root node. Assume that we observe a node $v$. If $C(v)=0$, then we go to the child that is associated with a segment $[a,b]$, where $a\leq x \leq b$. We continue this process until we meet $v$ such that $C(v)\geq 1$ or $v$ is a leaf. If $C(v)\geq 1$, then the result is $C(v)$. If $C(v)=0$ and $v$ is a leaf, then $c_x$ is not assigned. Let us describe this procedure in Algorithm \ref{alg:st-request}.
\begin{algorithm}[H]
    \caption{$\textsc{ColorRequest}(x,root)$. A request for a color $c_x$ from a segment tree with $root$ node as a root. If the color is not assignet, then the procedure returns $0$.} \label{alg:st-request}
    \begin{algorithmic}
        \State $v\gets root$
        \While{$v$ is not a leaf and $C(v)=0$} 
        \State $u\gets \textsc{LeftChild}(v)$
        \State $w\gets \textsc{RightChild}(v)$
        \If{$x\leq \textsc{Right}(u)$}
        \State $v\gets u$
        \EndIf
        \If{$x> \textsc{Right}(u)$}
        \State $v\gets w$
        \EndIf
        \EndWhile
        \State \Return{$C(v)$}.
    \end{algorithmic}
\end{algorithm}

\section{An Implementation of $\textsc{ComputeDistance}$ Procedure}\label{apx:comp-dist}

Let us consider a node $u$ and a set of children of the node $\textsc{Children}(u)$. Then, for any $v\in \textsc{Children}(u)$, we have $dist(v^1,v)=dist(v^1,u)+1$. Additionally, $dist(v^1,v^1)=0$. So, the distance computing is presented in recursive Algorithms \ref{alg:dist} and \ref{alg:dist-rec}

\begin{algorithm}[ht]
    \caption{$\textsc{ComputeDistance}()$. Computing distances from a root.} \label{alg:dist}
      \begin{algorithmic}
        \State $dist(v^1,v^1)\gets 0$
        \State $\textsc{ComputeDistanceRec}(1)$ 
    \end{algorithmic}
\end{algorithm}

\begin{algorithm}[ht]
    \caption{$\textsc{ComputeDistanceRec}(u)$. Computing distance to a node $u$.} \label{alg:dist-rec}
    \begin{algorithmic}
        \For{$v \in \textsc{Children}(u)$}
        \State $dist(v^1,v)\gets dist(v^1,u) + 1$
        \State $\textsc{ComputeDistanceRec}(v)$ 
        \EndFor
    \end{algorithmic}
\end{algorithm}

\section{Proofs of Lemmas \ref{lm:st-color}, \ref{lm:st-update} and \ref{lm:st-get-closest}}\label{apx:st-proofs}

{\bf Lemma \ref{lm:st-color}} {\em
The request color procedure $\textsc{ColorRequest}$ is correct and has $O(\log d)$ time complexity.}
\begin{proof}
If the segment tree stores correct colors for segments, then the correctness of the algorithm follows from the description. The algorithm returns a color only if $x$ belongs to a segment that has a single color.
On each step, we change a node to a node on the next level. The tree is a full binary tree. Therefore, it has $h$ levels. Hence, the time complexity is $O(h)=O(\log d)$ because $2^{h-1}\leq d\leq 2^{h}$.  
\Endproof\end{proof}

\noindent
{\bf Lemma \ref{lm:st-update}} {\em
The update procedure $\textsc{ColorUpdate}$ is correct and has $O(\log d)$ time complexity.}
\begin{proof}
If the segment tree stores correct colors for segments, then the correctness of the algorithm follows from the description. The algorithm colors a required segment and keeps the color of the rest part.

Procedures $\textsc{ColorUpdatePrefix}$ and $\textsc{ColorUpdateSuffix}$ on each step change a node to a node on the next level. The tree is a full binary tree. Therefore, the tree has $h$ levels. Hence, the time complexity of these two algorithms is $O(h)=O(\log d)$ because $2^{h-1}\leq d\leq 2^{h}$. The procedure $\textsc{ColorUpdate}$ on each step changes a node to a node on the next level, then, stops and invokes the procedure $\textsc{ColorUpdatePrefix}$ and the procedure $\textsc{ColorUpdateSuffix}$. Its time complexity is  $O(h)=O(\log d)$ also. We can say that procedures run consistently. Therefore, the total time complexity is $O(\log d)$. 
\Endproof\end{proof}

\noindent
{\bf Lemma \ref{lm:st-get-closest}} {\em
$\textsc{GetClosestColorLeft}$ and $\textsc{GetClosestColorRight}$ work correct with $O(\log d)$ time complexity. }
\begin{proof}
The proof is similar to the proof of Lemma \ref{lm:st-update}.
\Endproof\end{proof}

\section{Algorithms for Coloration Problem on a Segment Tree}\label{apx:st-algos}

\begin{algorithm}[ht]
    \caption{$\textsc{ColorUpdatePrefix}(l,r,c,c',root)$. An operation of update color of a prefix segment $[l,r]$ by a color $c$. The operation is for a segment tree with the $root$ node as a root. $c'$ is a color for rest part of the segment of $root$. If $c'$ is not assigned, then $c'=0$} \label{alg:st-update-prefix}
    \begin{algorithmic}
        \State $v\gets root$
        \While{$v$ is not a leaf}
        \If{$c'=0$ and $C(v)\geq 1$}
        \State $c'\gets C(v)$
        \EndIf
        \State $Max(v)\gets\max(Max(v),r)$, $Min(v)\gets \textsc{Left}(v)$
        \State $u\gets \textsc{LeftChild}(v)$
        \State $w\gets \textsc{RightChild}(v)$
        \If{$r\leq \textsc{Right}(u)$}
        \If{$c'\geq 1$}
         \State $C(w)\gets c'$
        \EndIf
        \State $v\gets u$
        \EndIf
        \If{$r> \textsc{Right}(u)$}
        \State $C(u)\gets c$, $Min(u)\gets \textsc{Left}(u)$, $Max(u)\gets \textsc{Right}(u)$
        \State $v\gets w$
        \EndIf
        \EndWhile
        \State $C(v)\gets c$, $Min(v)\gets \textsc{Left}(v)$, $Max(v)\gets \textsc{Right}(v)$
    \end{algorithmic}
\end{algorithm}

\begin{algorithm}[ht]
    \caption{$\textsc{ColorUpdateSuffix}(l,r,c,c',root)$. An operation of update color of a suffix segment $[l,r]$ by a color $c$. The operation is for a segment tree with $root$ node as a root. $c'$ is a color for rest part of the segment of $root$. If $c'$ is not assigned, then $c'=0$} \label{alg:st-update-suffix}
    \begin{algorithmic}
        \State $v\gets root$
        \While{$v$ is not a leaf}
        \If{$c'=0$ and $C(v)\geq 1$}
        \State $c'\gets C(v)$
        \EndIf
         \State $Min(v)\gets\min(Min(v),l)$, $Max(v)\gets \textsc{Right}(v)$
        \State $u\gets \textsc{LeftChild}(v)$
        \State $w\gets \textsc{RightChild}(v)$
        \If{$l\geq \textsc{Left}(w)$}
        \If{$c'\geq 1$}
         \State $C(u)\gets c'$
        \EndIf
        \State $v\gets w$
        \EndIf
        \If{$l< \textsc{Left}(w)$}
        \State $C(w)\gets c$, $Min(w)\gets \textsc{Left}(w)$, $Max(w)\gets \textsc{Right}(w)$
        \State $v\gets u$
        \EndIf
        \EndWhile
        \State $C(v)\gets c$, $Min(v)\gets \textsc{Left}(v)$, $Max(v)\gets \textsc{Right}(v)$
    \end{algorithmic}
\end{algorithm}

\begin{algorithm}[ht]
    \caption{$\textsc{ColorUpdate}(l,r,c,root)$. An operation of update color of a segment $[l,r]$ by a color $c$. The operation is for a segment tree with $root$ node as a root.} \label{alg:st-update}
    \begin{algorithmic}
        \State $v\gets root$
        \State $c'\gets 0$
        \State $Split \gets False$
        \While{$v$ is not a leaf and $Split=False$}
        \If{$c'=0$ and $C(v)\geq 1$}
        \State $c'\gets C(v)$
        \EndIf
        \State $Min(v)\gets\min(Min(v),l)$, $Max(v)\gets\max(Max(v),r)$
        \State $u\gets \textsc{LeftChild}(v)$
        \State $w\gets \textsc{RightChild}(v)$
        \If{$l\geq \textsc{Left}(w)$}
        \If{$c'\geq 1$}
         \State $C(u)\gets c'$
        \EndIf
        \State $v\gets w$
        \EndIf
        \If{$r\leq \textsc{Right}(u)$}
        \If{$c'\geq 1$}
         \State $C(w)\gets c'$
        \EndIf
        \State $v\gets u$
        \EndIf
        \If{$l\leq \textsc{Right}(u)$ and $r\geq \textsc{Left}(w)$}
        \State $Split \gets True$
        \State $\textsc{ColorUpdateSuffix}(l,\textsc{Right}(u),c, c',u)$
        \State $\textsc{ColorUpdatePrfix}(\textsc{Left}(w),r,c, c',w)$
        \EndIf
        \EndWhile
        \If{$v$ is a leaf}
        \State $C(v)\gets c$
        \EndIf
    \end{algorithmic}
\end{algorithm}

\begin{algorithm}[ht]
    \caption{$\textsc{GetClosestColorRightPrefix}(l,r,root)$. A request for the minimal index of a colored element of a prefix segment $[l,r]$. It returns $NULL$ if there is no such elements} \label{alg:st-min-color-prefix}
    \begin{algorithmic}
        \State $v\gets root$
        \State $Result\gets NULL$
        \If{$Min(v)\neq 2^h+1$}
        \State $Found\gets False$
        \While{$v$ is not a leaf and $Found=False$}
            \State $u\gets \textsc{LeftChild}(v)$
            \State $w\gets \textsc{RightChild}(v)$
            \If{$r\leq \textsc{Right}(u)$}
            \State $v\gets u$
            \EndIf
            \If{$r\geq \textsc{Left}(w)$ and $Min(u)\neq 2^h+1$}
             \State $Result\gets Min(u)$
             \State $Found\gets True$
            \EndIf
            \If{$r\geq \textsc{Left}(w)$ and $Min(u)=2^h+1$}
             \State $v\gets w$
            \EndIf
        \EndWhile
        \If{$Found=False$ and $Min(v)\neq 2^h+1$}
        \State $Result = Min(v)$
        \EndIf
        \EndIf
        \State \Return{$Result$}
    \end{algorithmic}
\end{algorithm}

\begin{algorithm}[ht]
    \caption{$\textsc{GetClosestColorRightSuffix}(l,r,root)$. A request for the minimal index of a colored element of a suffix segment $[l,r]$. It returns $NULL$ if there is no such elements} \label{alg:st-min-color-suffix}
    \begin{algorithmic}
        \State $v\gets root$
        \State $Result\gets NULL$
        \If{$Min(v)\neq 2^h+1$}
        \State $Found\gets False$
        \While{$v$ is not a leaf and $Found=False$}
            \State $u\gets \textsc{LeftChild}(v)$
            \State $w\gets \textsc{RightChild}(v)$
            \If{$l\geq \textsc{Left}(w)$}
              \State $v\gets w$
            \EndIf
            \If{$l\leq \textsc{Right}(u)$ and $Min(u)\neq 2^h+1$}
              \State $v\gets u$
            \EndIf
            \If{$l\leq \textsc{Right}(u)$ and $Min(u)=2^h+1$}
            \State $Result\gets Min(w)$
             \State $Found\gets True$
            \EndIf
        \EndWhile
        \If{$Found=False$ and $Min(v)\neq 2^h+1$}
        \State $Result = Min(v)$
        \EndIf
        \EndIf
        \State \Return{$Result$}
    \end{algorithmic}
\end{algorithm}
\section{Proof of Theorem \ref{th:preproc}}\label{apx:th1}
{\bf Theorem \ref{th:preproc}} {\em
Algorithm \ref{alg:preproc} for preprocessing has time complexity $O(n)$.}
\begin{proof}
As it was mentioned in Section \ref{sec:graph} the time complexity of  Heavy-light decomposition ${\cal P}$ construction is $O(n)$. 

Due to Lemma \ref{lm:st-construct}, time complexity of $\textsc{ConstructST}(P)$ is $O(|P|)$. The total time complexity of constructing all segment trees is $O\left(\sum_{P\in{\cal P}}|P|\right)=O(n)$ because of property of the decomposition.

Time complexity of $\textsc{ComputeDistance}$ is $O(n)$. Therefore, the total time complexity is $O(n)$.
\Endproof\end{proof}

\section{Proof of Lemma \ref{lm:coloring}}\label{apx:coloring}
{\bf Lemma \ref{lm:coloring}} {\em
Time complexity of Algorithm \ref{alg:coloring} is $O\left((\log n)^2\right)$.}
\begin{proof}
Due to properties of Heavy-light decomposition from Section \ref{sec:graph}, $t, t'= O(\log n)$. Due to Lemma \ref{lm:st-update}, each invocation of  $\textsc{ColorUpdate}$ for $P$ has time complexity $O(\log |P|)=O(\log n)$. So, the total time complexity is $O\left((\log n)^2 \right)$. 
\Endproof\end{proof}

\end{document}